\definecolor{ccolor}{RGB}{203,96,21}
\newcommand{\urg}[1]{{\color{red} #1}}
\newcommand{\ct}[1]{\urg{[cite]}}
\newcommand{\R}{\mathbb{R}}
\newcommand{\N}{\mathbb{N}}
\newcommand{\Lie}{\mathcal{L}} 
\newcommand{\Z}{\mathbb{Z}}
\newcommand{\vs}{\mathcal{V}}
\newcommand{\es}{\mathcal{E}}
\newcommand{\gs}{\mathcal{G}}
\DeclarePairedDelimiter{\abs}{\lvert}{\rvert}
\DeclarePairedDelimiter{\norm}{\lVert}{\rVert}
\DeclarePairedDelimiterX{\inp}[2]{\langle}{\rangle}{#1, #2}
\DeclarePairedDelimiter{\Mp}{\mathcal{M}_+(}{)}
\DeclareMathOperator*{\argmax}{arg\!\,max}
\newtheorem{thm}{Theorem}[section]
\newtheorem{assum}[thm]{Assumption}
\newtheorem{prop}[thm]{Proposition}
\newtheorem{prob}[thm]{Problem}
\newtheorem{rmk}{Remark} 
\title{\LARGE \bf 
Optimal Pulse Patterns through a Hybrid Optimal Control Perspective
}
\author{Jared Miller and Petros Karamanakos
\thanks{J.~Miller is with the Chair of Mathematical Systems Theory, Department of Mathematics,  University of Stuttgart, Stuttgart, Germany; e-mail: jared.miller@imng.uni-stuttgart.de.}
% \thanks{$^2$T. Geyer is with ABB System Drives, 5300 Turgi, Switzerland; (e-mail: t.geyer@ieee.org.}
\thanks{P.~Karamanakos is with the Faculty of Information Technology and Communication Sciences, Tampere University, 33101 Tampere, Finland; e-mail: p.karamanakos@ieee.org.}
}
\begin{document}

\maketitle
\thispagestyle{empty}
\pagestyle{empty}

%%%%%%%%%%%%%%%%%%%%%%%%%%%%%%%%%%%%%%%%%%%%%%%%%%%%%%%%%%%%%%

\begin{abstract}
\label{sec:abstract} Optimal pulse patterns (OPPs) are a modulation method in which the switching angles and levels of a switching signal are computed via an offline optimization procedure to minimize a performance metric, typically the harmonic distortions of the load current. Additional constraints can be incorporated into the optimization problem to achieve secondary objectives, such as the limitation of specific harmonics or the reduction of power converter losses.
%Incorrect choices of switching sequence could cause undesirable harmonics, thus yielding low energy efficiency. 
The resulting optimization problem, however, is highly nonconvex, featuring a trigonometric objective function and constraints as well as both real- and integer-valued optimization variables. This work casts the task of OPP synthesis for a multilevel converter as an optimal control problem of a hybrid system. This problem is in turn lifted into a convex but infinite-dimensional conic program of occupation measures using established methods in convex relaxations of optimal control. Lower bounds on the minimum achievable harmonic distortion are acquired by solving a sequence of semidefinite programs via the moment-sum-of-squares hierarchy, where each semidefinite program scales in a jointly linear manner with the numbers of permitted switching transitions and converter voltage levels.
% This occupation-measure based framework can be extended to include imposed symmetries on the voltage signal, state-space partitions, grid-side filters, three-phase considerations, and bounded switching losses.
\end{abstract}

% \urg{Internal notes: The introduction is too long, and should be condensed. The paper is definitely over the 8-page limit. One approach is to define the paper purely in terms of QaHW symmetry, which will cut down on many explanations.}
\section{Introduction}
\label{sec:introduction}

Power converters are a key technology for industry and the green-energy transition~\cite{zhong2012control, orlowska2016industrial}. 
% Electric drives (e.g. induction motors) can be controlled by pulse-width modulation techniques with comparatively low costs, thus converting the DC voltage signal to a rotating magnetic field \cite{orlowska2016industrial}. 
However, the switching nature of the converters makes their control a challenging task~\cite{kouro2010recent}.
Although modulation helps mask this characteristic, the performance of the power electronic system deteriorates at low switching-to-fundamental frequency ratios.
Under such conditions, optimizing the switching signals can deliver substantial gains, including lower harmonic distortion, higher efficiency, and compliance with harmonic grid standards.

In this context, programmed modulation methods, such as selective harmonic elimination (SHE) and optimal pulse patterns (OPPs), are promising candidates.
SHE eliminates targeted harmonics by solving a nonlinear system of equations to determine the switching angles, i.e., the switching time instants, of the switching signal~\cite{turnbull1964selected, patel1973generalized, dahidah2014review}.
However, this approach typically yields multiple solutions with different characteristics, requiring a post-processing step to select the most suitable one~\cite{chiasson2004unified}.
Moreover, SHE may generate infinitely many solutions in cases of degeneracy or underconstrained setups, while it may yield no feasible solution as the number of unknowns increases~\cite{wells2005selective}.

% Suppression of low-order harmonics leads to the growth of high-order harmonics in a waterbed effect, and thus \ac{SHE} methods may violate harmonics specifications if improperly designed. The equality-constrained approach to \ac{SHE} was generalized to allow for inequalities in \ct at the expense of no longer generating roots.
% Patterns synthesized through \ac{SHE} are usually deployed via a look-up table, but Newton-based rootfinding can be used to  \cite{liu2008real}. 
% Refer to \cite{dahidah2014review} for a survey of \ac{SHE} techniques and deployments.

% \ac{SHE} begins with a set of equations, and enumerates over all solutions to find an optimal value. In contrast, the  

OPPs, on the other hand, are obtained through an optimization procedure, where the objective captures a performance metric such as the output current harmonic distortion~\cite{buja1980optimum}. The optimization yields the switching angles and corresponding switch positions (i.e., converter output levels) of the switching signal. Hence, the OPP problem involves both continuous and discrete decision variables. Together with the trigonometric form of the objective function and constraints, this results in a mixed-integer, nonconvex problem that is challenging to solve. For this reason, OPPs are computed offline and stored in look-up tables for real-time use.

The optimization-based formulation, however, provides design flexibility. OPPs can be tailored to relax symmetry and switching constraints~\cite{birth2019generalized}, mitigate specific harmonics~\cite{rahmanpour2023three}, reduce common-mode voltage~\cite{koukoula2024optimal}, or limit converter losses~\cite{geyer2023optimized}. A wide range of optimization approaches has been investigated, including iterative solvers~\cite{meili2006optimized}, heuristic methods such as genetic and swarm algorithms~\cite{dahidah2008hybrid, kavouski2012bee}, virtual-angle formulations~\cite{koukoula2024fast}, gradient-based optimization~\cite{ali2024optimal}, as well as learning-based methods, such as deep reinforcement learning and differentiable programming~\cite{qashqai2020new, abu2025optimized, abu2025diff}. These methods typically provide locally optimal solutions, but cannot  guarantee global optimality of the proposed solution.

% \ac{MPC} \cite{garcia1989model} can be employed for online synthesis of pulse patterns \cite{geyer2016model}. While exhaustive searches over all switching patterns quickly lead to exponential scaling as the time horizon increases, integer-valued \ac{MPC} for inverter control can be rendered tractable through branching and spherical decoding \cite{geyer2011computationally}, or through linear stages with terminal cost approximation by dynamic programming \cite{stellato2016high}. These \ac{MPC} predict and control at discrete  time instants, whereas the \ac{SHE} and \ac{OPP} approaches allow for real-valued variables.

Given the nonconvexity of the OPP problem, a critical task is to establish lower bounds on the harmonic distortion, which can indicate if a feasible pulse pattern is globally optimal. The most closely related work to this paper is~\cite{wachter2021convex}, which applies sum-of-squares (SOS) polynomial optimization techniques~\cite{lasserre2009moments} to lower-bound differential-mode harmonic distortion under the requirement of a fixed sequence of switching transitions. A major obstacle in applying polynomial optimization to OPPs lies in the trigonometric terms appearing in the harmonic constraints and objective. 
%that polynomial terms in the switching angle are found in the objective, whereas trigonometric terms are present in the harmonics constraints.
To address this,~\cite{wachter2021convex} approximates the constraints via Taylor polynomials with bounding residuals, and the objective via low-order interpolating polynomials.

% The fundamental insight of this paper is that the angle terms (in the harmonic distortion objective) can be interpreted as the amount of time taken to traverse between two points 

This paper adopts a time-domain perspective, casting the OPP synthesis problem as an optimal control problem (OCP) for a mode-scheduled hybrid system \cite{teel2012hybrid, hale2014mode}. This hybrid control problem is lifted into an infinite dimensional linear program (LP) in measures~\cite{claeys2016modal, zhao2017optimal, miller2023peakhy}, using standard methods from the literature of optimal control~\cite{rubio1975generalized}. The proposed OCP-relaxed infinite dimensional LP in occupation measures has solely polynomial-valued data, and can thus be discretized through SOS methods without requiring Taylor approximation or \textit{a-priori} fixing of the switching sequence. The size of the semidefinite program increases linearly with both the number of converter levels and the number of permitted switching transitions, while the computational complexity grows jointly sub-quadratically.

The contributions of this work are:
\begin{itemize}
    \item A framework to analyze OPP synthesis through the lens of hybrid system optimal control.
    \item The introduction of an infinite-dimensional LP that provides a lower bound on the minimal current total demand distortion (TDD) under imposed constraints.
    \item Truncation and recovery by the moment-SOS hierarchy.
    \item Numerical results based on a five-level converter.
\end{itemize}

To the best of the authors' knowledge, optimal control theory and occupation measures have not yet been applied to the design of OPPs.
% Instead, control-theoretic tools have been used in the separate field of integer-valued \ac{MPC} for discretely sampled signals \cite{stellato2016high, geyer2016model}.
% \input{sections/summary}
\section{Preliminaries}
\label{sec:preliminaries}
\begin{acronym}
% \acro{BSA}{Basic Semialgebraic}
% \acro{DDC}{Data Driven Control}

% \acro{GAS}{Globally Asymptotically Stable}

% \acro{CSP}{Correlative Sparsity Pattern}

\acro{FW}{Full-Wave}

\acro{HW}{Half-Wave}
\acro{LMI}{Linear Matrix Inequality}
\acroplural{LMI}[LMIs]{Linear Matrix Inequalities}
\acroindefinite{LMI}{an}{a}

% \acro{LQR}{Linear Quadratic Regulator}
% \acroplural{LMI}[LMIs]{Linear Matrix Inequalities}
% \acroindefinite{LQR}{an}{a}

\acro{LP}{Linear Program}
\acroindefinite{LP}{an}{a}
\acro{MPC}{Model Predictive Control}

\acro{OCP}{Optimal Control Problem}
\acroindefinite{OCP}{an}{a}

\acro{OPP}{Optimal Pulse Pattern}
\acroindefinite{OPP}{an}{a}

% \acro{ODE}{Ordinary Differential Equation}

% \acro{POP}{Polynomial Optimization Problem}

\acro{PSD}{Positive Semidefinite}

\acro{QW}[QaHW]{Quarter-and-Half-Wave}

% \acro{PD}{Positive Definite}

% \acro{PDE}{Partial Differential Equation}

\acro{SDP}{Semidefinite Program}
\acroindefinite{SDP}{an}{a}

\acro{SHE}{Selective Harmonics Elimination}
\acroindefinite{SHE}{an}{a}

\acro{SOS}{Sum of Squares}
\acroindefinite{SOS}{an}{a}

\acro{THD}{Total Harmonic Distortion}
\acro{TDD}{Total Demand Distortion}

% \acro{WSOS}{Weighted Sum of Squares}

\end{acronym}

\subsection{Notation}
% \urg{Fill in the notation}
The $\ell$-dimensional vector space of real numbers is $\R^\ell$. The set of integers is $\Z$, the subset of nonnegative integers is $\Z_{\geq 0}$, and the further subset of natural numbers is $\N$. The set of integers between $a$ and $b$ (inclusive) is $a..b$. Given vectors $c_1, c_2 \in \R^n$, the comparator $\leq$ ($\geq$) will hold elementwise as $c_1 \leq c_2$ ($c_1 \geq c_2$). 
The unit ball in 2 dimensions is $B = \{(c, s) \in \R^2 \mid c^2+s^2 = 1\}$. The map $\psi: [0, 2\pi] \mapsto B$ implements a trigonometric lift $\psi(\theta) = (\cos(\theta), \sin(\theta))$.
Given angles $\alpha_1, \alpha_2 \in [0, 2\pi]$ with $\alpha_1 \leq \alpha_2$, the symbol $B([\alpha_1, \alpha_2])$ denotes the arc of the unit circle between angles $\alpha_1 \leq \alpha_2$ (image of $[\alpha_1, \alpha_2]$ along $\psi$).

\subsection{Optimal Pulse Patterns}

% A DC/AC converter (inverter) is a power-electronic device that can output voltages at discrete values.  
The goal of OPPs is to minimize harmonic distortions subject to switching, device, and harmonic constraints. Table~\ref{tab:constraints} summarizes the constraints involved in an OPP problem, with a \checkmark placed next to designer-chosen requirements in this work. Moreover, Table~\ref{tab:opp_tdd_param} summarizes the parameters used to describe an OPP problem with the constraints in Table~\ref{tab:constraints}.

\begin{table}[t!]
    \centering
        \caption{Constraints in OPP}
    \begin{tabular}{cl}         
         Design & Constraint Description \\ \hline                     
           \checkmark & Number of switches \\           
           \checkmark & Unipolarity \\
          & Harmonics specifications  \\
           & Interlocking angle \\
           & Converter levels \\
           & Cannot jump by $> 1$ level\\
           & $2\pi$-periodicity \\           
    \end{tabular}
    \label{tab:constraints}
\end{table}

\begin{table}[t!]
\caption{\label{tab:opp_tdd_param} Parameters for OPP problem description}
\centering
\begin{tabular}{cl}
\textbf{Parameter}              & \textbf{Description}                      \\ \hline
$L$                    & Converter levels                \\
$d$                    & Pulse number $(k = 4d)$         \\
$\Theta$               & Interlocking angle         \\
$q, h$        &  Harmonics  constraints 
\end{tabular}
\vspace{-8pt}
\end{table}

The possible switch positions of an $N$-level converter, corresponding to its output voltage levels, are contained in a vector $L = \{v_{n}\}_{n=1}^N$. OPPs involve the creation of a $2\pi$-periodic switching signal $u: [0, 2\pi] \rightarrow L$. Assuming that the dc source has a constant value $V_{dc}$ and the converter chooses a switch position $u(\theta) \in L$, the converter output voltage is $V_{\text{out}}(\theta) = (V_{dc}/2) u(\theta)$. 

% Figure \ref{fig:inverter} depicts an ANPC circuit topology for a 5-level ($L = \{-2, -1, 0, 1, 2\}$) inverter connected to an inductive load (e.g.  induction machine). 
Assuming an inductive load $L_\text{load}$, the load current dynamics are $L_\text{load} \dot{I}_\text{load}(\theta) = V_{\text{out}}= (V_{dc}/2) u(\theta)$. In the sequel of this work, normalized (per unit) values are considered as per $I = (2L_\text{load}/V_{dc}) I_\text{load}$, allowing for single-integrator dynamics as $\dot{I} = u$.

% \begin{figure}[h]
%     \centering
%     \includegraphics[width=0.5\linewidth]{example-image-a}
%     \caption{Drawing of an ANPC inverter connected to an induction machine \urg{TODO: draw this}}
%     \label{fig:inverter}
% \end{figure}

% This paper will restrict to modulation signals possessing \ac{QW} symmetry. A signal $u(\theta)$ is \ac{QW}-symmetric with pulse number $d$ if 

A $k$-switching signal $u(\theta)$ can be parameterized by switching angles $\{\alpha^i\}_{i=1}^{k}$ and switch positions $\{u^i\}_{i=0}^k$ with
\begin{align}
\label{eq:v_theta_step}
    u(\theta) &= \begin{cases}  u^0 & \theta \in [0, \alpha^1) \\      
        u^i & \theta \in [\alpha^{i}, \alpha^{i+1}), \ i \in 1..k-1\\        
        u^k & \theta \in [\alpha^k, 2\pi)
    \end{cases}.
\end{align}

The number of switch positions in a signal can be parameterized in terms of its half-integer \textit{pulse number} $d = k/4$. Symmetries can be imposed on the switching signal to reduce the computational complexity of the SHE- and OPP-based procedures. These symmetries include
% Possible symmetries of $x(\theta)$ include:
\begin{subequations}
\label{eq:symmetries}
\begin{align}
& \text{Full-Wave} &  u(\theta+2\pi) &=u(\theta) & & \forall \theta \in \R \\
    & \text{Half-Wave} &  u(\theta+\pi) &= -u(\theta) & & \forall \theta \in [0, \pi] \\    
    & \text{Quarter-Wave} &  u(\theta) &= u(\pi - \theta) & & \forall \theta \in [0, \pi].
\end{align}
\end{subequations}
Quarter-and-half-wave (QaHW) symmetry implies that the signal $u(\theta)$ is zero-mean. If the induced current $I(\theta)$ driven by $u(\theta)$ is likewise zero-mean, then $\forall \theta \in [0, \pi]: I(\pi-\theta) = -I(\theta)$. Figure \ref{fig:symmetries} visualizes the voltage $u(\theta)$ (top) and the current $I(\theta)$ (bottom) of a $d=3, \ N=3$ QaHW signal. 

\begin{figure}[t!]
    \centering
    \includegraphics[width=0.7\linewidth]{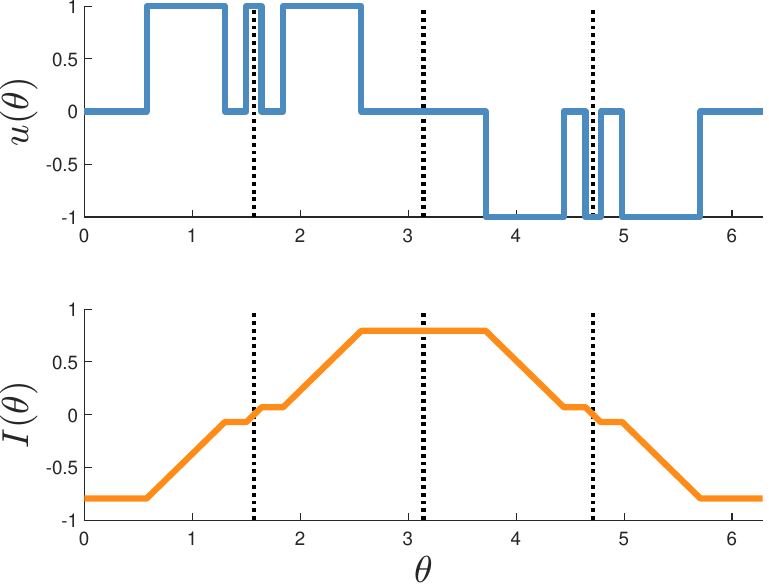}
    \caption{ QaHW signals $u(\theta)$ and $I(\theta)$ with $L = \{-1, 0, 1\}$}
    \label{fig:symmetries}
    \vspace{-6pt}
\end{figure}

% In \ac{QW} symmetry, the switching signal $u(\theta)$ is . 
\ac{QW} signals always have integer pulse numbers $d$ in the first quarter of the fundamental period.
The switching angles of a \ac{QW} signal over the full angle range $[0, 2\pi]$ are entirely defined by the first $d$ angles as:
\begin{align}
    \{\alpha^i\}_{i=1}^{4d} = \begin{Bmatrix}
            \{\alpha^i\}_{i=1}^{d}, & \{\pi - \alpha^{d-i+1}\}_{i=1}^{d},  \\
            \{\pi + \alpha^i\}_{i=1}^{d}, & \{2\pi - \alpha^{d-i+1}\}_{i=1}^{d}
    \end{Bmatrix}.
\end{align}
        % \{\alpha^i\}_{i=1}^{d}, & \{\pi - \alpha^{d-i+1}\}_{i=1}^{d}, &{\pi + \alpha^i\}_{i=1}^{d}, & \{2\pi - \alpha^{d-i+1}\}_{i=1}^{d}
% \ac{QW} will therefore use the convention $\alpha^0 = 0$ and $\alpha^{d+1} = \pi/2$ when describing the switching sequence. 
% If the switching signal $u(\theta)$ is \ac{QW} and the driven current $I(\theta)$ is zero-

% Under \ac{QW}, the current $I(\theta)$ in the pure inductor satisfies $\forall \theta \in [0, \pi]: \ I(\theta) = -I(\pi - \theta)$ if $I(\$.

% \ac{FW} symmetry defines $2\pi$-periodicity.
% \ac{HW} symmetry ensures that all even-harmonic Fourier coefficients are zero. \ac{QW} symmetry further restricts all $a_\ell$ harmonics to zero, and requires integral pulse numbers $d$. This paper will focus on signals with \ac{QW} symmetry due to page limitations, but we note that the framework developed in this paper can also be applied to signals with \ac{FW} or \ac{HW} structure. 

% Allowable pulse numbers $d$ satisfy $2d \in \Z_{\geq 0}$ (half-integers). 

The variables of the OPP problem will be the switching angles $\{\alpha^i\}_{i=1}^{d}$ and the indices $\{n^i\}_{i=0}^d$ such that the output levels satisfy $u^i = u_{n^i}$. A constraint $n^i - n^{i-1} \in \{-1, 1\}$ is imposed to require that the switching signal steps up or down by only one level at any switching angle.

The $N$-level converter operates at a fundamental frequency of $f_1$ (e.g., 50 Hz) with an angular frequency of $\omega_1 = 2 \pi f_1$.   Consecutive switching transitions must be separated by at least an \textit{interlocking time} of $T_s$ time (e.g., 40 $\mu$s) to prevent short circuits. The interlocking time produces an interlocking angle $\Theta = \omega_1 T_s$ between any two consecutive switching transitions over the $2\pi$-periodic operation.
% Letting $\{\Delta u^i\}_{i=1}^{d}$ denote the voltage level differences over the switching transitions $\Delta u^i = u^i - u^{i-1}$, 
The \ac{QW}-symmetric signal has a Fourier series  $u(\theta) = \sum_{\ell=1}^\infty b_\ell \sin(\ell \theta)$ with
% \begin{subequations}
\begin{align}
     &     b_\ell = \begin{cases}
        0 & 2\ell \in \Z \\
        \frac{4}{\ell \pi} \sum_{i=1}^d \left(u^i - u^{i-1}\right) \cos(\ell \alpha^i) & \text{else}. \label{eq:fourier_coeff}
    \end{cases}
\end{align}
% with explicitly computable coefficients
% \begin{align}
%     a_0 &= 2 \pi u^0 - \frac{1}{\pi} \sum_{i=1}^{k}\Delta u^i \alpha^i, & b_0 &= 0 \label{eq:fourier_switch}\\
%     a_\ell &= \frac{-1}{\ell \pi} \sum_{i=1}^{k} \Delta u^i \sin(\ell \alpha^i), & b_\ell &= \frac{-1}{\ell \pi} \sum_{i=1}^{k} \Delta u^i \cos(\ell \alpha^i) \nonumber
% \end{align}
% , the Fourier coefficients $(a_\ell, b_\ell)$ admit explicit formulae in terms of $\{u^i\}, \{\alpha^i\}$ \cite{birth2019generalized}:

% The signal energy in $u(\theta)$ over one fundamental period is
% \begin{align}
%     \norm{u}_2^2 &= \int_{\theta=0}^{2 \pi} u(\theta)^2 d \theta = 4 \int_{\theta=0}^{\pi/2} u(\theta)^2 d \theta \\
%     & = \pi\sum_{\ell=0}^n b_\ell^2 = 4 \sum_{i=1}^{d+1} (u^k)^2(\alpha^i - \alpha^{i-1}). \nonumber 
% \end{align}
% The inductor current $I(\theta)$ takes values at the switching angles $\alpha^i$ of 
% slope = uf(i);
%                     offset = I_val(i);
%                     prev = ah(i);                
%                     if slope == 0
%                         energy_curr = offset.^2 * (ah(i+1)-ah(i));
%                     else
%                         pt_end = (offset + slope*(ah(i+1)-prev))^3/(3*slope);
%                         pt_start = (offset)^3/(3*slope);
%                         energy_curr = pt_end - pt_start;
%                     end
%                     energy_raw = energy_raw + energy_curr/(2*pi);
% \begin{align}
%     I(\alpha^i) = \sum_{i=0} 
% \end{align}

If the applied signal $u(\theta)$ to the inductor is zero-mean and the $I(\theta)$ load current is likewise zero mean, then the signal energy of the current  $I(\theta)$ satisfies
% can be computed from its harmonics
\begin{align}
    \norm{I}_2^2 &= \textstyle \int_{\theta=0}^{2\pi} I(\theta)^2 d \theta \nonumber =  4\int_{\theta=0}^{\pi/2} I(\theta)^2 d \theta.
    \intertext{The signal energy can be expressed as a sum over Fourier coefficients via Parseval's theorem for real Fourier series as}
     \norm{I}_2^2&= \textstyle \pi\left(\sum_{\ell\geq 2}^\infty\frac{ b^2_\ell}{(2 \pi f_1 \ell)^2}\right) = \frac{1}{4 \pi f_1^2 } \left(\sum_{\ell\geq 2}^\infty\frac{ b^2_\ell}{\ell^2}\right). \label{eq:current_parseval}
\end{align}
The term $\norm{I}_2^2$ has a closed-form expression as a cubic polynomial in the switching angles $\alpha$ and the initial current $I(0)$ for fixed $u$ (under the angle-extended notation $\alpha^0=0, \ \alpha^{d+1} = \pi/2)$:
% \begin{subequations}
\begin{align}
    I^0 &= I(0) & I^i &= I^{i-1} + u^{i-1}(\alpha^i - \alpha^{i-1}),  \quad \forall i \in 1..d \nonumber    \\
    E^0 &= 0 & E^i &= \begin{cases}
        (I^{i-1})^2 (\alpha^i - \alpha^{i-1}) & u^i = 0 \\
        ((I^i)^3 - (I^{i-1})^3)/(3 u^i) & u^i \neq 0
    \end{cases} \nonumber    
\end{align}
\begin{align}
        \norm{I}_2^2 &= 4 \textstyle \sum_{i=0}^{d} E^i. 
\end{align}
% \end{subequations}
% \urg{TODO: write the spline expression for the current energy in terms of $\alpha, u$}
% This energy is rational in $u$ and is a degree-3 polynomial in $\{\alpha^i\}$. \urg{TODO: maybe write it?} 
% In the case where $V_1$ is the nominal fundamental voltage (with $V_1 = (V_{dc}/2)b_1^{\text{nominal}}$), the voltage \ac{TDD} is defined 
%
The current TDD is defined w.r.t.~the rated rms value $I_R$:
\begin{align}
    % \text{TDD}_V &= \frac{V_{dc}}{2 V_1} \sqrt{\textstyle\sum_{\ell\neq 1}^\infty  b_\ell^2} = \frac{V_{dc}}{2 V_1} \sqrt{\norm{u}_2^2/\pi - b_1^2}.    
    \text{TDD}_I &= 
    % \frac{1}{\sqrt{2} I_R \omega_1 L_{\text{load}}  } \frac{V_{dc}}{2} \textstyle \sqrt{\sum_{\ell \geq 2}^\infty b_\ell^2/\ell^2} \nonumber\\
    % \intertext{The current \ac{TDD} can be expressed as a function of $1/\pi$ times the signal energy via the Parseval relation in \eqref{eq:current_parseval} as }
    % \text{TDD}_I
    % &=  
    \frac{1}{\sqrt{2} I_R \omega_1 L_{\text{load}}  } \frac{V_{dc}}{2} \sqrt{\norm{I}_2^2/\pi -  b_1^2}. \label{eq:tdd_current}
\end{align}

The current TDD is a measure of spectral efficiency: low values of it indicate that the majority of signal energy is present inside the fundamental harmonic.
Voltage harmonics constraints for the switching sequence will be specified by an  odd-integer vector $q$ and a convex set $h$. 
As an example, the constraint $b_1 = M$, $b_3 = 0$,  $b_5 \in [-0.02, 0.02]$, $b_7 \in [0, 0.1]$ can be encoded as 
\begin{align}
    q &= [1; 3; 5; 7], & h = \{M\} \times \{0\} \times [-0.02, 0.02] \times [0, 0.1].\nonumber
\end{align}

% $q = [\sin(\theta); \sin(3\theta)], \ h = [M, 0]$ and $q^+(\theta) = [\sin(5\theta); -\sin(5\theta)], \ h^+ = [0.1, 0.1]$.  

The minimal-energy OPP synthesis problem is as follows:
\begin{prob}
    Given parameters in Table \ref{tab:opp_tdd_param}, find a switching sequence ($\{\alpha^i\}, \{n^i\}$) with $u^i = v_{n^i}$ to minimize:
    \label{prob:tdd_orig}
\begin{subequations}
    \label{eq:tdd_orig_alpha}
\begin{align}
    J^* = &\min_{\alpha^i, n^i} \norm{I}_2^2 \label{eq:tdd_orig_obj}\\
   \text{s.t.}\quad & \textstyle  \frac{4}{\pi} \int_{\theta=0}^{\pi/2} \sin(q \theta) u(\theta) d \theta \in h 
    \label{eq:tdd_orig_alpha_she} \\
    & n^{i+1} - n^i \in \{-1, 1\} & & \forall i \in 0..d-1 \label{eq:tdd_orig_step}\\
    & \alpha^{i+1} \geq \alpha^i + \Theta &  &\forall i \in 1..d-1\label{eq:tdd_orig_alpha_lock}\\ 
    & \alpha^1 \geq \Theta/2, \ \pi/2 \geq \alpha^d + \Theta/2 & \label{eq:tdd_orig_alpha_lock_end}\\    
    & n^i \in 1..N & &  \forall i \in 0..d \label{eq:tdd_orig_alpha_L}\\
    & n^0 = n^d. \label{eq:tdd_orig_alpha_n_end}
\end{align}
\end{subequations}
\end{prob}
The harmonics constraints in \eqref{eq:tdd_orig_alpha_she} exploit the \ac{QW} symmetry of the signal to only integrate between $\theta \in [0, \pi/2]$.
The formalism in \eqref{eq:tdd_orig_alpha} allows for inhomogenous spacing between levels \cite{dahidah2014review}.
% Constraint \eqref{eq:tdd_orig_step} is written under the convention that the inverter levels are uniformly spaced, but this can be adapted for inhomogenous spacing.
% When the levels $L$ are uniformly spaced, the \ac{OPP} problem can be di
In the case where the fundamental mode $b_1$ is fixed to a constant values (e.g., $b_1 = M$), synthesis of a signal $u$ that minimizes $\norm{I}^2_2$ will in turn minimize $\text{TDD}_I$. Problem \ref{prob:tdd_orig} involves integer-valued variables \eqref{eq:tdd_orig_alpha_L}, transcendental constraints \eqref{eq:tdd_orig_alpha_she}, and nonlinearities in the objective \eqref{eq:tdd_orig_obj}.
% \subsection{Optimal Control of Hybrid Systems}

% Scherer Psatz for Matrices \cite{scherer2006matrix}
\section{OPPs as Hybrid Control}
\label{sec:opp_as_ocp}
% \urg{The main theory}

The OPP task in Problem \ref{prob:tdd_orig} is a nonconvex optimization problem over the finite-dimensional variables $(\{\alpha^i\}_{i=0}^{d+1}, \{n^i\}_{i=0}^{d})$. This section explores how Problem \ref{prob:tdd_orig} can be cast as an OCP of a hybrid dynamical system. 

The motivation for this conversion is in treating the  simultaneous presence of the angle $\alpha$ in $\norm{u}_2^2$ and trigonometric expressions of $\alpha$ in the Fourier coefficients \eqref{eq:fourier_coeff}. If only trigonometric terms were present, then a trigonometric transformation $(\cos(\alpha^i), \sin(\alpha^i)) \rightarrow (c_i, s_i)$ can be used to represent the harmonics constraints as polynomials (as performed in SHE). The fundamental insight undergirding this section is that the angle $\alpha^i$ corresponds to the time required to traverse the unit circle at $1\,$rad/s when starting at $(1, 0)$ and ending at $(\cos(\alpha^i), \sin(\alpha^i))$.

% Explanation in this section will accord with Full-Wave symmetry.

\subsection{Assumptions}
The following assumptions are required given Table \ref{tab:opp_tdd_param}:
\begin{assum}
    The pulse number $d$ is an integer.
    \label{assum:switch}
\end{assum}
\begin{assum}
    The interlocking angle $\Theta$ is positive.
    \label{assum:interlock}
\end{assum}
\begin{assum}
$L$ is sorted in increasing order, is symmetric ($v_n \in L$ implies $-v_n \in L$), and has $0 \in L$.
% There are a finite odd number of levels $L$ sorted in an increasing order, each level $v_n \in L$ is finite, $0 \in L$, and $v_n \in L$ implies $-v_n \in L$.
\label{assum:level}
\end{assum}
\begin{assum}
    The vector $q$ has bounded entries. \label{assum:harmonic}
\end{assum}
These assumptions help define a well-defined OPP instance. 
Assumptions \ref{assum:switch} and \ref{assum:interlock} together ensure lack of Zeno execution (no infinite number of switches in a finite time) \cite{teel2012hybrid}. Assumption \ref{assum:level} is a property of the converter topology. Assumption \ref{assum:harmonic} ensures bounded harmonics constraints.

% Assumption \ref{assum:compact} promotes tractability of the \ac{LP} and its discretizations.

\subsection{Hybrid System Description}

% \urg{TODO: change the indexing. The quarter-wave graph is additionally restricted.}

% The hybrid system involves $N(d+1)$ locations at which continuous-time trajectories evolve.
% The hybrid controller is an instance of reset control that zeros out the clock angle $\phi$ after a jump. The controller chooses the angles $\theta$ at which to jump states. 
Each mode of the hybrid system is indexed by $(n, i)$, using the modulation output level $n \in 1..N$ and the number of elapsed switching transitions $i \in 0..d$. 
A transition graph $\gs$ with vertices $\vs$ and edges $\es$ can be formed to represent the switching sequence. Denoting the central level as $N_c = (N+1)/2$, the set of vertices in $\gs$ are
\begin{align}
    \vs = \left\{(n, i) \mid \begin{array}{ll}
         n \in 1..N, \ \  i \in 1..d \\
         \abs{n-N_c} \leq i, \ \text{mod}(i +n, 2)=0
    \end{array} \right\}.
\end{align}
The total number of vertices in this graph is $\vs = (\lceil{d/2}\rceil + 1) + 2 \sum_{i=1}^{N_c} (\lceil{(d-i)/2}\rceil+1)$.
The step-up and step-down edges between nodes in $\vs$ are 
\begin{align}
    \es^\pm &= \{(n\mp1, i) \rightarrow (n, i+1)\}_{(n, i)}.
\end{align}
The total edge set is $\es = \es^+ \, \cup \,  \es^-$. The total number of edges in this transition graph is $\abs{\es} = 2d - (N_c - 1)N_c$.  Unipolar patterns only allow for nonnegative-valued modulation levels in the first quarter-period $(\forall \theta \in [0, \pi/2]: u(\theta) \geq 0)$. The unipolar transition graph $\gs^{\text{uni}}$ is a subset of $\gs$ keeping only vertices with $n \geq N_c$.
We denote $\mathcal{P}$ the set of paths (sequences of arcs) in the transition graph $\gs$ starting $(N_c, 0)$ and ending at $(n, d)$ for some $n \in 1..N$. 
We denote by $\mathcal{P}$ the set of paths (sequences of arcs) in the transition graph $\gs$ starting $(N_c, 0)$ and ending at $(n, d)$ for some $n \in 1..N$. 

% The graph $\gs$ has $N(d+1)$ vertices and $2(N-1)d$ edges.
% the vertices $\vs$ correspond to the $N(d+1)$ locations, and the edges $2(N-1)d$ edges $\es$ are drawn according to step-up and step-down switches.
% Transitions between the modes occur when the modulation level steps up or down, exactly corresponding to the switching angles. 
Figure \ref{fig:transition_graph} visualizes the graphs $\gs$ (top) and $\gs^{\text{uni}}$ (bottom) for a signal with $d=8$ switching angles and $N=7$ levels. The black nodes are the vertices indexed by $(n, i)$. The blue arrows represent step-ups $\es^+$ and the red arrows detail step-downs $\es^-$. 

\begin{figure}[t!]
    \centering

     \begin{subfigure}[b]{0.65\linewidth}
         \centering
         \includegraphics[width=\textwidth]{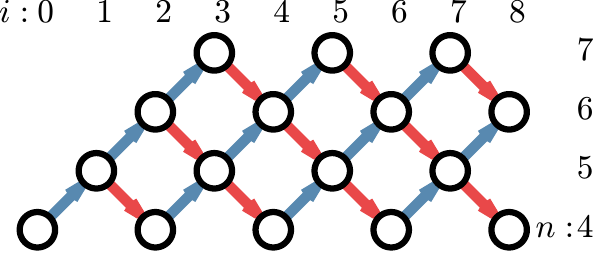}
         \caption{Multipolar Transitions $\gs$}
         \label{fig:y equals x}
     \end{subfigure}
     \vspace{0.5cm}
  
     \begin{subfigure}[b]{0.65\linewidth}
         \centering
         \includegraphics[width=\textwidth]{img/transition_qw_unipolar.pdf}
         \caption{Unipolar Transitions $\gs^{\text{uni}}$}
         \label{fig:three sin x}
     \end{subfigure}
    \caption{Transition graphs for $d=8, N=7$}
    \label{fig:transition_graph}
    \vspace{-6pt}
\end{figure}

\subsubsection{Mode and Switch Dynamics}

The states at each mode are the cosine-angle $c$, sine-angle $s$, clock angle $\phi$, and inductor current $I$. The phase angle $\theta$ is mapped as $(c, s) = \psi(\theta)$. The clock angle $\phi$ records the amount of time (angle) as the last transition, and is used to ensure that the interlocking angle constraint is satisfied. These variables are stacked into $x = [c, s, \phi, I]$. The dynamics at each mode $(n, i) \in \vs$ depend only on the converter level $n$ with 
% \begin{equation}
\begin{align}
    \dot{c} &= -s &  \dot{s} &= c &  \dot{\phi} &= 1 & \dot{I} &= v_n.\label{eq:mode_dynamics}
\end{align}
The harmonics constraint in \eqref{eq:tdd_orig_alpha_she} can be cast as integral constraints over polynomials in $(c, s)$ using Chebyshev polynomials of the second kind $U_\ell$:
\begin{align}
    U_0(c) &= 1, \  U_1(c) = 2c, \ U_{\ell+1}(c) = 2c U_{\ell}(c) - U_{\ell-1}(c) \nonumber  \\
    \sin(\ell \theta) &= \sin(\theta) U_{\ell-1}(\cos(\theta)) \rightarrow s U_{\ell-1}(c).
\end{align}

The overall state space in which the variables $x$ can take values is $X = B \times [0, \pi/2] \times I_s$.
% Entering mode $(n, i)$ requires $i-1$ switches to previously been executed. 
Due to the interlocking angle constraint (Assumption \ref{assum:interlock}), at least $\Theta i$ radians must elapse before entering mode $(n, i)$. Similarly, if the signal has a pulse number of  $d$, then mode $(n, i)$ must be exited at the latest by  $\pi/2 - \Theta(d-i)$ radians.  The maximum amount of time that can elapse before a switch is also $\pi/2 - \Theta d$, thus setting an upper-bound on the clock $\phi$.
Any \ac{QW} trajectory is required to start in the location $(n, i) = (N_c, 0)$, with an initial condition of
\begin{align}
    X^0_{N_c, 0} = \{1\} \times \{0\} \times  [\Theta/2, \pi/2-\Delta d] \times I_S.
\end{align}
If the trajectory $x(\theta)$ is operating in mode $(n, i) \in \vs$, its states can evolve within the set
\begin{align}
    \label{eq:support_mode}
    X_{n, i} &= B([\Theta i, \pi/2 - \Theta (d-i)]) \times [0, \pi/2 - \Theta d] \times I_s. 
\end{align}

Figure \ref{fig:support_circ} visualizes the support region for $(c, s)$ in \eqref{eq:support_mode} for a pulse number $d=4$ ($k=16$) and an interlocking angle of $\Theta = \pi/18$. The solid black dot in each pane is the initial point $(c, s) = (1, 0)$, and the unfilled dot is the terminal point $(c, s) = (0, 1)$.

\begin{figure}[t!]
    \centering
    \includegraphics[width=\linewidth]{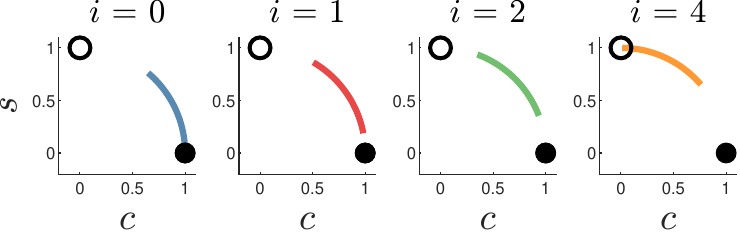}
    \caption{Valid regions for $(c, s)$ under $d=4, \ \Theta = \pi/18$}
    \label{fig:support_circ}
    \vspace{-6pt}
\end{figure}

% \end{equation}
% \subsubsection{Transitions}
% The two types of transitions permitted are $(N-1)d$ step-up and $(N-1)d$ step-down jumps. While all transitions are indexed from $i \in  1..d$, the step-up transitions are further indexed as $n \in 2..N$ while the step-down transitions are indexed as $n \in 1..N-1$. This indexing is because the inverter can not step below (above) the minimum (maximum) producable modulation. 
Each transition $e \in \es$ has identical clock-zeroing resets
\begin{align}
    c_+ &= c &  s_+ &= s &  \phi_+ &= 0 & I_+ &= I.\label{eq:jump_dynamics}
\end{align}
% The step-up $(+)$ and step-down $(-)$ transitions describe arcs in the transition graph (e.g. Figure \ref{fig:transition_graph}) as
% \begin{align}
%     \text{Transition$(n, i, \pm)$:} \quad (n\mp 1, i) & \rightarrow (n, i+1)
% \end{align}

The arc $e : (n \mp 1) \rightarrow (n, i+1)$ may be traversed if the state $x(\theta)$ is a member of the guard set
\begin{align}
    G_{n, i}^\pm = &B([\Theta i, \pi/2 - \Theta (d-i)])\nonumber\\
    &\qquad\times [\Theta, \pi/2 - \Theta (d +1/2)] \times I_s. \label{eq:support_guard}
\end{align}

The guard set constraint of $\phi \geq \Theta$ in \eqref{eq:support_guard} enforces the the interlocking angle constraint \eqref{eq:tdd_orig_alpha_lock}.
% The guard set \eqref{eq:support_guard} enforces the interlocking dwell time constraint, because a transition cannot occur unless the clock satisfies $\phi \geq \Theta$. 

The terminal nodes of any path are vertices with $(n, d) \in \vs$. The terminal state sets associated with these nodes are
\begin{align}
    \label{eq:support_mode_terminal}
    X_{n, d}^{\pi/2} &=  \{0\} \times \{1\} \times  [\Theta/2, \pi/2 - \Theta (d + 1/2)] \times I_s. 
\end{align}
The number of such terminal nodes is 
\begin{align}
    N_{\pi/2} = \min [(N-1)/2 + \text{mod}(d+1, 2),  \ d+1].
\end{align}

\subsubsection{Optimal Control Problem}

The OPP task in Problem \ref{prob:tdd_orig} involves the selection of switching angles $\{\alpha^i\}$  and level indices $\{n^i\}$. The hybrid system optimal control task is defined with respect to angles $\{\alpha^i\}$ and a path $P \in \mathcal{P}$. 
A path $P$ can be uniquely constructed from each $(\{\alpha^i\}, \{n^i\})$ feasible for constraints \eqref{eq:tdd_orig_step}-\eqref{eq:tdd_orig_alpha_n_end}. At each $i \in 1..d$, the incoming edge $P^{i}$ in the path $P$ is 
\begin{align}
    P^i =  (n^{i-1}, i-1) \rightarrow (n^i, i). \label{eq:arc_edge}
\end{align}
Furthermore, we define $\text{src}(P)$ as the initial modulation level $n^0 = \text{src}(P^1)$. 
Given a jump pattern $\mathcal{T}= (\{\alpha^i\}, P)$, we define Loc$(\theta; \mathcal{T})$ as a trajectory-dependent function that returns the current mode $(n, i)$ of the dynamics at angle $\theta$
\begin{align}
i^*(\theta; \mathcal{T}) &= \max_{i \in 0..d+1} i  \quad \text{s.t.} \ \   \theta \geq \alpha^i\\
    \text{Loc}(\theta; \mathcal{T})&= (n^{i^*(\mathcal{T})}, i^*(\mathcal{T})).
\end{align}
The location function $\text{Loc}$ can be used to define an indicator function $\chi_{n, i}^\mathcal{T}(\theta)$, which takes the value $1$ when $\text{Loc}(\theta; \mathcal{T}) = (n, i)$, and $0$ otherwise.
% An integrating operator $\Lambda^{FW}$ can be defined 
We further define a linear operator $\Lambda_\mathcal{T}$ as a mapping that integrates \ac{QW}-symmetric functions $z(x, u)$ as
\begin{align}
    \Lambda_\mathcal{T}: z(x, u) \mapsto 4 \sum_{n, i} \int_{\theta=0}^{\pi/2} &\chi_{n, i}^\mathcal{T}(\theta)z(x(\theta), v_n) d \theta \label{eq:lintegral_operator}
\end{align}

% \begin{align}
%     \sum_{n, i} \int_{\theta=0}^{\pi/2} &\chi_{n, i}^\mathcal{T}(\theta)(z(x(\theta), v_n) +z(x(\theta+\pi), -v_n)) \\
%     &+ \chi_{n, i}^\mathcal{T}(\theta) z(x(\pi-\theta), v_n)+z(x(2\pi - \theta), -v_n) )d \theta. \nonumber
% \end{align}

The objectives and harmonics constraints in \eqref{eq:tdd_orig_alpha} can be equivalently expressed using the operator $\Lambda_\mathcal{T}$ with respect to the functions $z(x, u) \in \{I^2, s U_{\ell-1}(c)\}$ for odd $\ell$.

The mode-selected hybrid system OCP interpretation for Problem \ref{prob:tdd_orig} is as follows:
\begin{prob}
\label{prob:tdd_hy}
% The energy-minimizing statement in Problem \ref{prob:tdd} can be posed as an optimization problem in the discontinuous function $u$:
For a given converter with parameters in Table \ref{tab:opp_tdd_param}, choose an initial condition $(\phi_0, I_0)$ and sequence $\mathcal{T}$ satisfying  
\begin{subequations}
    \label{eq:tdd_hy}
\begin{align}
    J^*_{hy} = &\inf_{\phi_0, I_0, \mathcal{T}} \quad \Lambda_\mathcal{T}[I^2]  \label{eq:tdd_hy_obj}\\
 \text{s.t. } \quad   & \Lambda_\mathcal{T}[s U_{q-1}(c)] \in h, \quad\label{eq:tdd_hy_she} \\
    & x(\theta) \text{ follows  \eqref{eq:mode_dynamics} when Loc}(\theta;\mathcal{T}) =  (n, i) \label{eq:tdd_hy_follow} \\
    & x(\theta) \text{ resets as \eqref{eq:jump_dynamics} when } \theta \in \{\alpha^i\} \\    
    & \phi(\theta) \geq \Theta \text{ when switching (constraint \eqref{eq:support_guard})} \label{eq:tdd_hy_guard}\\
    & \phi(0) =  \phi_0, \  I(0) =  I_0,  \label{eq:tdd_hy_clock}\\    
    & \phi_0 \in [\Theta/2, \pi/2-\Theta (d+1/2)],  \ I_0 \in I_S \\
    & \phi(\pi/2) \geq \Theta/2 \label{eq:tdd_hy_endlock}\\
    & \mathcal{T} = (\{\alpha^i\}_{i=}^{d}, P), \ P \in \mathcal{P}\\
    &\forall i \in 1..d-1: \  \alpha^{i+1} \geq \alpha^{i} \label{eq:tdd_hy_lock}\\    
    & \alpha_1 \geq 0, \ \alpha^d \leq \pi/2, \  \text{Loc}(0; \mathcal{T}) = (\text{src}(P), 0).     \label{eq:tdd_hy_start}
\end{align}
\end{subequations}
\end{prob}

\begin{prop}
Problems \ref{prob:tdd_orig} and  \ref{prob:tdd_hy} have the same objective $(J^* = J^*_{hy})$.    
\end{prop}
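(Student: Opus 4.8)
\emph{Proof sketch.} The plan is to establish the two inequalities $J^*_{hy}\le J^*$ and $J^*\le J^*_{hy}$ by exhibiting a value-preserving correspondence between feasible points of Problem~\ref{prob:tdd_orig} and feasible points of Problem~\ref{prob:tdd_hy}. The correspondence is the one anticipated in \eqref{eq:arc_edge}: a tuple $(\{\alpha^i\}_{i=1}^d,\{n^i\}_{i=0}^d)$ feasible for \eqref{eq:tdd_orig_alpha} is sent to the jump pattern $\mathcal{T}=(\{\alpha^i\}_{i=1}^d,P)$ with arcs $P^i=(n^{i-1},i-1)\to(n^i,i)$, to the initial current $I_0=I(0)$ (a free variable in Problem~\ref{prob:tdd_hy}, and implicit in Problem~\ref{prob:tdd_orig} through the closed form of $\norm{I}_2^2$), and to the clock value $\phi_0=\phi(0)$ fixed by the clock dynamics at the quarter-period boundary; conversely, from $\mathcal{T}=(\{\alpha^i\},P)$ one reads $\{\alpha^i\}$ off as the jump times and $\{n^i\}$ as the level labels of the vertices visited by $P$. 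Assumptions~\ref{assum:switch} and \ref{assum:interlock} guarantee that each admissible execution has finitely many jumps in $[0,\pi/2]$, so the operator $\Lambda_\mathcal{T}$ of \eqref{eq:lintegral_operator} is a finite sum of finite integrals and every object above is well defined.

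I would first prove $J^*_{hy}\le J^*$. Given a tuple feasible for \eqref{eq:tdd_orig_alpha}, I would form $\mathcal{T}$ as above and integrate the mode dynamics \eqref{eq:mode_dynamics} on each interval $[\alpha^i,\alpha^{i+1})$, starting from the state left by the reset \eqref{eq:jump_dynamics}. The pivotal observation is that the subsystem $\dot c=-s,\ \dot s=c$ with $(c,s)(0)=(1,0)$ and $(c,s)$-preserving resets has the unique, continuous solution $(c(\theta),s(\theta))=\psi(\theta)=(\cos\theta,\sin\theta)$ on all of $[0,\pi/2]$; hence the state visits exactly the arcs in \eqref{eq:support_mode}, passes the guards \eqref{eq:support_guard}, and ends in the terminal sets \eqref{eq:support_mode_terminal}. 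I would then verify, constraint by constraint, that \eqref{eq:tdd_hy_follow}--\eqref{eq:tdd_hy_start} are implied by \eqref{eq:tdd_orig_step}--\eqref{eq:tdd_orig_alpha_n_end}: path validity $P\in\mathcal{P}$ is the combinatorial content of \eqref{eq:tdd_orig_step} and \eqref{eq:tdd_orig_alpha_n_end} under the definitions of $\vs$ and $\es$; the guard requirement \eqref{eq:tdd_hy_guard} reduces to \eqref{eq:tdd_orig_alpha_lock} because the clock reads $\phi(\alpha^{i+1})=\alpha^{i+1}-\alpha^i$ just before the $(i{+}1)$-st reset; and the clock bounds \eqref{eq:tdd_hy_clock}--\eqref{eq:tdd_hy_endlock} encode the boundary interlocking \eqref{eq:tdd_orig_alpha_lock_end}. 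Because $\dot I=v_n$ with continuous resets reproduces the current of Problem~\ref{prob:tdd_orig} for the same $I(0)$, the cost and harmonics values match (see below), so $\mathcal{T}$ is feasible with the same objective.

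The reverse inequality $J^*\le J^*_{hy}$ I would prove in mirror fashion: from a feasible $(\phi_0,I_0,\mathcal{T})$ of Problem~\ref{prob:tdd_hy} I would read off $\{\alpha^i\}$ and $\{n^i\}$ and check \eqref{eq:tdd_orig_alpha}. Here $n^i\in1..N$ and the single-level step \eqref{eq:tdd_orig_step} hold because $P$ is a path in $\gs$, whose edges $\es^\pm$ connect only levels differing by one; $n^0=n^d$ follows from $P\in\mathcal{P}$ and the structure of $\vs$; and the ordering and interlocking \eqref{eq:tdd_orig_alpha_lock}--\eqref{eq:tdd_orig_alpha_lock_end} follow from \eqref{eq:tdd_hy_lock}, the guards \eqref{eq:support_guard}, the terminal sets \eqref{eq:support_mode_terminal}, and the admissible range of $\phi_0$.

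It remains to reconcile the functionals. Since exactly one mode is active at almost every $\theta$, $\sum_{n,i}\chi^\mathcal{T}_{n,i}(\theta)\equiv1$, so for an argument $z(x,u)$ the operator $\Lambda_\mathcal{T}$ reduces to $4\int_{0}^{\pi/2} z(x(\theta),u(\theta))\,d\theta$ with $u(\theta)=v_n$ on the active mode $(n,i)$; in particular $\Lambda_\mathcal{T}[I^2]=4\int_{0}^{\pi/2}I(\theta)^2\,d\theta=\norm{I}_2^2$ by the quarter-wave folding in \eqref{eq:current_parseval}. Applying the Chebyshev substitution $\sin(q\theta)\mapsto s\,U_{q-1}(c)$ to the harmonic integrand and again using $u=v_n$ on mode $(n,i)$, the left side of \eqref{eq:tdd_hy_she} equals $\tfrac{4}{\pi}\int_{0}^{\pi/2}\sin(q\theta)\,u(\theta)\,d\theta$, the left side of \eqref{eq:tdd_orig_alpha_she}; hence the harmonic constraints coincide, as do the costs, and the two inequalities give $J^*=J^*_{hy}$. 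I expect the main obstacle to be the constraint accounting — especially in the reverse direction — namely showing that the arc supports \eqref{eq:support_mode}, the guards \eqref{eq:support_guard}, the terminal sets \eqref{eq:support_mode_terminal}, and the clock dynamics carve out \emph{exactly} (neither a relaxation nor a tightening of) the feasible $\alpha$-region of \eqref{eq:tdd_orig_alpha}, with care at the quarter-period reflection boundaries where the half-interlocking terms $\Theta/2$ originate, and in confirming that no admissible hybrid execution fails to arise from some jump pattern $\mathcal{T}$.
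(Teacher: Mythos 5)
Your proposal is correct and follows essentially the same route as the paper's (much terser) proof: a value-preserving correspondence between feasible points via the arc construction in \eqref{eq:arc_edge}, with the key observation that the interlocking separation \eqref{eq:tdd_orig_alpha_lock} is not carried by the ordering constraint \eqref{eq:tdd_hy_lock} but is instead recovered through the clock reading $\phi(\alpha^{i+1})=\alpha^{i+1}-\alpha^i$ and the guard condition \eqref{eq:tdd_hy_guard}. Your write-up simply fills in the dynamics-integration and functional-matching details that the paper leaves implicit.
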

\begin{proof}
    Feasible points $\{\alpha^i\}, \{n^i\}$ of Problem \ref{prob:tdd_orig} can be mapped into representations $\mathcal{T}$ in a one-to-one manner, as noted by the transformation in \eqref{eq:arc_edge}. The reverse process may generally fail, because constraint \eqref{eq:tdd_orig_alpha_lock} enforces separation by $\Theta$, whereas \eqref{eq:tdd_hy_lock} lacks this interlocking angle constraint. Instead, the $\Theta$ dwell time constraint is enforced by  \eqref{eq:tdd_hy_guard}, thus assuring equivalence.    
    % and $\mathcal{T}$ representations of the switching sequence $u(\theta)$ are equivalent, 
\end{proof}
\section{Convex Linear Measure Program}
\label{sec:opp_lp}
% \urg{The main theory. Sticd with the trigonometric formulation.}

Both the finite-dimensional $(\alpha, n)$ program in Problem \ref{prob:tdd_orig} and the hybrid OCP in Problem \ref{prob:tdd_hy} are nonconvex optimization problems. 
This section uses measure-theoretic methods from \cite{zhao2017optimal, lewis1980relaxation} to compute lower-bounds on the minimum signal energy using convex programming.

\subsection{Measure Preliminaries}
We first briefly review concepts and notation associated with measures \cite{tao2011introduction}. The set of continuous functions defined over a set $S \subseteq \R^n$ is $C(S)$, and the
set of nonnegative Borel measures supported on $S$ is $\Mp{S}$. 
A duality pairing $\inp{\cdot}{\cdot}$ exists between any function $f \in C(S)$ and measure $\mu \in \Mp{S}$ by Lebesgue integration: $\inp{f}{\mu} = \int f \ d \mu =  \int_S f(x) d \mu(x)$. 
% If $S$ is compact, then $C_+(S)$ and $\Mp{S}$ form topological duals, and the pairing $\inp{\cdot}{\cdot}$ is an inner product between these spaces \cite{barvinok2002convex}.
The mass of a measure $\mu \in \Mp{S}$ is $\inp{1}{\mu}$, and $\mu$ is a \textit{probability measure} if $\inp{1}{\mu} = 1$.  The Dirac Delta $\delta_{x = x'}$ is a probability measure supported only at the point $x'$, thus satisfying the relation $\forall f \in C(S): \inp{f}{\delta_{x=x'}} = f(x')$. For a multi-index $\gamma \in \N^n$, the $\gamma$-moment of the measure $\mu$
 is $\inp{x^\gamma}{\mu} = \inp{\prod_{i=1}^n x^{\gamma_i}}{\mu}.$
\subsection{Involved Measures}

% Refer to Appendix \ref{app:measure} for notation and information about concepts in measure theory. \urg{TODO: pull forward some of the measure material.}

\begin{table}[t!]
    \centering
%    \vspace{-0.5cm} %TODO VSPACE
    \caption{Measure variables used in OPP}
    \begin{tabular}{l r  l l}
        Initial & $\mu^0$ & \hspace{-0.3cm}$\in \Mp{X^0}$ &   \\
        Terminal & $\mu^\partial_{n}$ &  \hspace{-0.3cm}$\in \Mp{X^{\pi/2}_{n, d}}$ & $n+d \in 2\Z$ \\
        Occupation & $\mu_{n, i} $ &  \hspace{-0.3cm}$\in \Mp{X_{n, i}}$ &  $(n, i) \in \vs$ \\
         Step Up & $\rho_{n, i}^+$ &  \hspace{-0.3cm}$\in \Mp{G_{n, i}^+}$ & \hspace{-0.2cm} $(n-1, i-1) \rightarrow (n, i)\in \es^+$ \\
         Step Down & $\rho_{n, i}^-$ &  \hspace{-0.3cm}$\in\Mp{G_{n, i}^-}$& \hspace{-0.2cm} $(n+1, i-1) \rightarrow (n, i)\in \es^-$ \\
    \end{tabular}
    \label{tab:meas_tdd}
    \vspace{-8pt}
\end{table}

Table \ref{tab:meas_tdd} lists measure variables that are involved in the transference from the OCP in Problem~\ref{prob:tdd_hy} to a convex LP.

The initial measure $\mu^0$ encodes initial conditions at $i=0$ before any switching. The terminal measure $\mu^\partial$ stores a distribution of  stopping points of the trajectory at mode $i=d$. The occupation measure $\mu$ tracks the evolution of the trajectory within each mode $(n, i)$. The mass of the occupation measure $\inp{1}{\mu_{n, i}}$ can be interpreted as the (averaged) amount of angular arc the switching signal spends in mode $(n, i)$. The jump measures $\rho^\pm$ keep information about the switching angle and active arc of the transition in the graph $\gs$. The support constraints in $\phi$ for  $\mu^\partial$ and $\rho^\pm$ encode the clock-based constraints \eqref{eq:tdd_hy_guard} and \eqref{eq:tdd_hy_endlock}, respectively.

\subsection{Construction Procedure}
\label{sec:construction}
We first review how measures in Table \ref{tab:meas_tdd} can be created from an initial current $I_0 \in I_S$ and a switching sequence $\mathcal{T} = (\alpha, P)$ feasible for constraints \eqref{eq:tdd_hy_follow}-\eqref{eq:tdd_hy_start}. Let $I(\theta)$ denote the inductor current when the switching sequence $u(\theta)$ is applied. 
The initial value of the clock at angle $\theta=0$ over this switching sequence is $\phi_0 = \pi/2 - \alpha^{d}$.
The initial measure is chosen as $\tilde{\mu}^0 = \delta_{\phi = \pi/2-\alpha^{d}, I = I_0}$. The stopping measures are picked as $\tilde{\mu}^\partial_n = \mu^0_n$. The switching measures are $\tilde{\rho}^\pm_{n, i} = \delta_{(c, s) = \psi(\alpha^i), \phi = \alpha^i - \alpha^{i-1}, I = I(\alpha^i)}$ if $n = \text{dst}(P^i)$ and $P^i \in \mathcal{E}^\pm$, and are zero otherwise. The occupation measure is chosen to satisfy a relation for all test functions $\forall w \in C(X)$:
\begin{align}
   \textstyle \inp{w}{\tilde{\mu}} = \int w(x) d \tilde{\mu}_{n, i}(x) = \int_{\theta=0}^{\pi/2} \chi_{n, i}^\mathcal{T}(\theta) w(x(\theta)) d \theta. \label{eq:occ_int}
\end{align}
Under the definition in \eqref{eq:occ_int}, the current signal energy satisfies $\Lambda_\mathcal{T}[I^2]= 4\sum_{(n, i)} \inp{I^2}{\tilde{\mu}_{n, i}}$,
% \begin{align}
%     \Lambda_\mathcal{T}[u^2] &= 4\sum_{(n, i)} v_n^2 \inp{1}{\tilde{\mu}_{n, i}}, & \! \! 
%     , \nonumber 
% \end{align}
and the specified harmonics related for odd $\ell$ by
\begin{align}
    \sin(\ell \theta): \ \ \Lambda_\mathcal{T}[s U_{\ell-1}(c)] = 4 \textstyle \sum_{(n, i)} \inp{s U_{\ell-1}(c)}{\mu_{n, i}}. \nonumber
\end{align}

The signal energy and harmonics constraints are thus preserved when transiting from the $(\alpha, \mathcal{T})$ switching sequence representation to the measure embedding in Table \ref{tab:meas_tdd}.

\subsection{Convex Reformulation}

We now describe how properties of the hybrid  OCP in Problem \ref{prob:tdd_hy} can be encoded as convex expressions in the measures from Table \ref{tab:meas_tdd} following \cite{zhao2017optimal, lewis1980relaxation}.

\subsubsection{Objective}
The objective is reformulated into $\norm{I}_2^2  \rightarrow \Lambda_{\mathcal{T}}[\norm{I}_2^2] \rightarrow \textstyle 4 \sum_{n, i}  \inp{I^2}{\mu_{n, i}(x)}$, which is linear in $\mu$.
% \begin{subequations}
% \begin{align}
%     % \norm{u}_2^2 & \rightarrow \textstyle 4 \sum_{n, i} v_n^2 \inp{1}{\mu_{n, i}} \label{eq:objective_u}\\
%     \label{eq:objective_I}
% \end{align}
% \end{subequations}

\begin{rmk}
    If the target to minimize is the signal energy of the voltage  $u(\theta)$ (for voltage TDD), then the objective can instead be chosen as $\norm{u}^2_2 \rightarrow 4 \sum_{n, i} v_n^2 \inp{1}{\mu_{n, i}(x)}$.
\end{rmk}

\subsubsection{Initial}
The initial distribution should be a probability measure, which includes as a special case a measure solution constructed from a single trajectory (Section \ref{sec:construction}).
Due to the \ac{QW} symmetry, only the central level may be a possible initial location at $i=0$. As such, the initial constraint is 
\begin{align}
 \inp{1}{\mu^0_{(N+1)/2}} = 1.\label{eq:con_init}
\end{align}
% \begin{align}
%    \textstyle \sum_{n=1}^N \inp{1}{\mu^0_n} = 1 .\label{eq:con_init}
% \end{align}

\subsubsection{Harmonics}
The harmonics constraint \eqref{eq:tdd_hy_she} is respected across the occupation measures as
\begin{align}
     &\textstyle 4 \sum_{n, i} v_n \inp{s U_{q-1}(c)}{\mu_{n, i}(x)} \in h. \label{eq:con_harmonic} 
\end{align}

\subsubsection{Uniformity} The signal $u(\theta)$ must be entirely defined between the angles $\theta \in [0, \pi/2]$. This inspires a constraint on the occupation measures $\forall \eta \in C(B([0, \pi/2])):$
\begin{align}
      \textstyle \sum_{n, i} \inp{\eta}{\mu_{n, i}} = \textstyle \int_{\theta=0}^{\pi/2} \eta(\psi(\theta)) d \theta. \label{eq:con_uniformity}
\end{align}

\subsubsection{Conservation}

The measures in Table \ref{tab:meas_tdd} are connected together by the mode dynamics \eqref{eq:mode_dynamics} and jump dynamics \eqref{eq:jump_dynamics}. This connection is enforced with a conservation law (Liouville/Martingale equation) \cite{lewis1980relaxation}.
Figure \ref{fig:continuity} visualizes the continuity relationship between measures, which is used to develop Liouville-type flow constraints. Trajectories enter mode $(n, i)$ through incoming jumps $(\rho^\pm_{n \mp i})$ and through the initial condition if $i=0$ $(\mu^0)$. Trajectories depart mode $(n, i)$ upon stopping at the $d$-th swtitching transition $(\mu^\partial_{n})$,  or switching out of the mode ($\rho^\pm_{n, i+1}$).

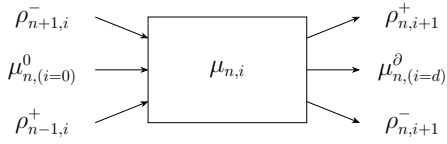
\begin{figure}[t!]
\centering
\resizebox{0.7\linewidth}{!}{%
\begin{circuitikz}
\tikzstyle{every node}=[font=\LARGE]
\draw  (7.5,13) rectangle  node {\LARGE $\mu_{n, i}$} (11.25,10.5);
\draw [->, >=Stealth] (6.25,11.75) -- (7.5,11.75);
\draw [->, >=Stealth] (11.25,11.75) -- (12.5,11.75);
\draw [->, >=Stealth] (11.25,12.5) -- (12.5,13);
\draw [->, >=Stealth] (11.25,11) -- (12.5,10.5);
\draw [->, >=Stealth] (6.25,10.5) -- (7.5,11);
\draw [->, >=Stealth] (6.25,13) -- (7.5,12.5);
\node [font=\LARGE] at (5,11.75) {$\mu_{n, (i=0)}^0$};
\node [font=\LARGE] at (5,10.5) {$\rho_{n-1, i}^+$};
\node [font=\LARGE] at (5,13) {$\rho_{n+1, i}^-$};

\node [font=\LARGE] at (13.75,11.75) {$\mu_{n, (i=d)}^\partial$};
\node [font=\LARGE] at (13.75,10.5) {$\rho_{n, i+1}^-$};
\node [font=\LARGE] at (13.75,13) {$\rho_{n, i+1}^+$};
\end{circuitikz}
}%
\caption{Continuity relation between measures}
\label{fig:continuity}
    \vspace{-6pt}
\end{figure}

The Lie derivative $\Lie_n$ of a function $w \in C^1(X)$ along dynamics in \eqref{eq:mode_dynamics} at mode $(n, i)$ is
\begin{align}
    \Lie_{n, i} w(x) = (-s \partial_c + c \partial_s + \partial_\phi + v_n \partial_I)w(x).
\end{align}
Given a function $w \in C^1(X)$ and an angle $\alpha$, we define the following notational abbreviations:
\begin{align}
    w_\alpha(\phi, I) &= w(\cos(\alpha), \sin(\alpha), \phi, I)  \\
    w_\phi(c, s, I) &= w(c, s, 0, I).    
\end{align}
The substitution $w_\alpha$ is used at the initial angle $\alpha^0 = 0$ and the terminal angle $\alpha^{d+1} = \pi/2$. The symbol $w_\phi$ implements the reset dynamics \eqref{eq:jump_dynamics}.
The conservation relation holding for all levels $n \in 1..N$ and functions $w \in C(X)$ (using the convention $\mu_{n, i}=0$ if $(n, i) \not\in \vs$ with similar formulations for $\rho^\pm, \mu^{\partial}$) is:
\begin{align}
    &\inp{w_0}{\mu^0} + \inp{\Lie_{N_c, 0}w}{\mu_{N_c, 0}} = \inp{w}{\rho^+_{N_c+1, 1} + \rho^{-}_{N_c-1, 1}} \nonumber\\   
    & \forall i \in 1..d-1:\label{eq:con_conservation}  \\
    & \  \ \inp{w_\phi}{\rho^-_{n+1, d} + \rho^+_{n-1, d}} +  \inp{\Lie_{n, i} w}{\mu_{n, i}} = \inp{w}{\rho^+_{n, 1} + \rho^{-}_{n, 1}} \nonumber\\
    &\inp{w_\phi}{\rho^-_{n+1, d} + \rho^+_{n-1, d}} + \inp{\Lie_{n, d} w}{\mu_{n, d}} = \inp{w_{\pi/2}}.{\mu^\partial_{n}}\nonumber
\end{align}
This continuity relation ensures that the (expected) value of $w$ at the the starting point $\theta=0$ equals the (expected) value of $w$ at the end point $\theta=\pi/2$ minus the accumulated change in $w$ over all flows and transitions \cite{zhao2017optimal}.
% The above conservation law implies that the difference between 

% \begin{enumerate}
%     \item Flow
%     \item Jump
%     \item Uniformity
%     \item Harmonics
%     \item Initial
%     \item Terminal
% \end{enumerate}
    
\subsection{Measure Program Formulation}

\begin{prob}
\label{prob:tdd_lp}
    Given parameters in Table \ref{tab:meas_tdd}, solve
    \begin{subequations}        
    \label{eq:tdd_lp}
    \begin{align}
        p^* = &\inf \textstyle\ 4 \sum_{n, i}  \inp{I^2}{\mu_{n, i}(x)} \\
        & \text{Constraints \eqref{eq:con_init}, \eqref{eq:con_harmonic}, \eqref{eq:con_uniformity}, \eqref{eq:con_conservation}} \label{eq:con_all}\\
        & \text{Measures from Table \ref{tab:meas_tdd}.} 
    \end{align}
    \end{subequations}
\end{prob}

\begin{thm}
    The LP in \eqref{eq:tdd_lp} lower-bounds the original OPP problem in \eqref{eq:tdd_orig_alpha} with $p^* \leq P^*$.
\end{thm}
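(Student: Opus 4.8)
The plan is to establish the inequality $p^* \leq P^*$ by showing that every feasible point of the original OPP problem in \eqref{eq:tdd_orig_alpha} induces a feasible point of the measure LP in \eqref{eq:tdd_lp} with an equal objective value; taking the infimum over all such induced points then yields $p^* \leq J^* = P^*$. The bridge is Proposition stating $J^* = J^*_{hy}$, so it suffices to work with the hybrid OCP in Problem \ref{prob:tdd_hy}. Concretely, I would take an arbitrary feasible tuple $(\phi_0, I_0, \mathcal{T})$ for Problem \ref{prob:tdd_hy} (equivalently, a feasible $(\{\alpha^i\}, \{n^i\})$ for Problem \ref{prob:tdd_orig}), generate the associated inductor current $I(\theta)$, and invoke the explicit construction in Section \ref{sec:construction}: set $\tilde\mu^0 = \delta_{\phi = \pi/2 - \alpha^d,\, I = I_0}$, $\tilde\mu^\partial_n$ from the terminal data, $\tilde\rho^{\pm}_{n,i}$ as the Dirac at the switching point along the traversed edge, and $\tilde\mu_{n,i}$ the occupation measure defined by \eqref{eq:occ_int}.

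The core of the argument is then a verification, one constraint at a time, that this tuple of Dirac/occupation measures lies in the feasible set of Problem \ref{prob:tdd_lp}. First, nonnegativity and support: each constructed measure is nonnegative and, by construction, supported where \eqref{eq:support_mode}, \eqref{eq:support_guard}, \eqref{eq:support_mode_terminal} require — here the clock bookkeeping is what makes the interlocking constraints \eqref{eq:tdd_orig_alpha_lock}--\eqref{eq:tdd_orig_alpha_lock_end} translate into the support sets $G^{\pm}_{n,i}$ and $X^{\pi/2}_{n,d}$, exactly as in the proof of the Proposition. Second, \eqref{eq:con_init} holds since $\tilde\mu^0$ is a single Dirac of mass one placed at the central level $N_c$, per QW symmetry and \eqref{eq:tdd_hy_start}. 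Third, the uniformity constraint \eqref{eq:con_uniformity} follows because the indicators $\chi^{\mathcal{T}}_{n,i}$ partition $[0,\pi/2]$, so summing \eqref{eq:occ_int} over $(n,i)$ for a test function depending only on $(c,s)=\psi(\theta)$ recovers $\int_0^{\pi/2}\eta(\psi(\theta))\,d\theta$. Fourth, \eqref{eq:con_harmonic} is precisely the identity established in Section \ref{sec:construction} relating $\Lambda_{\mathcal{T}}[s\,U_{q-1}(c)]$ to $4\sum_{n,i} v_n \inp{s U_{q-1}(c)}{\mu_{n,i}}$, combined with feasibility $\Lambda_{\mathcal{T}}[sU_{q-1}(c)] \in h$ from \eqref{eq:tdd_hy_she}. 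Fifth, the conservation law \eqref{eq:con_conservation}: for a single trajectory, the fundamental theorem of calculus applied to $\theta \mapsto w(x(\theta))$ on each mode-interval $[\alpha^i,\alpha^{i+1})$ gives $w(x(\alpha^{i+1}_-)) - w(x(\alpha^i_+)) = \int_{\alpha^i}^{\alpha^{i+1}} \Lie_{n,i} w(x(\theta))\,d\theta = \inp{\Lie_{n,i}w}{\tilde\mu_{n,i}}$; summing over $i$ along the path $P$ and accounting for the clock-zeroing reset \eqref{eq:jump_dynamics} (which is what the $w_\phi$ substitution encodes) telescopes to exactly \eqref{eq:con_conservation}, with the initial term handled by $w_0$ at $\alpha^0 = 0$ and the terminal term by $w_{\pi/2}$ at $\alpha^{d+1}=\pi/2$. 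Finally, the objective: $\Lambda_{\mathcal{T}}[I^2] = 4\sum_{n,i}\inp{I^2}{\tilde\mu_{n,i}} = \norm{I}_2^2$, matching \eqref{eq:tdd_orig_obj}.

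Having shown feasibility with matched cost, I conclude $p^* \leq \norm{I}_2^2$ for every OPP-feasible $(\{\alpha^i\},\{n^i\})$, hence $p^* \leq J^* = J^*_{hy} = P^*$ (adopting $P^* = J^*$ as the notation of \eqref{eq:tdd_orig_alpha}). One subtlety to state carefully: since the infimum in Problem \ref{prob:tdd_lp} is over a strictly larger class (general probability/occupation measures, not just those from single trajectories), the inequality may be strict; this is the expected relaxation gap and is precisely why the LP yields a lower bound rather than an exact value. I expect the main obstacle to be the careful handling of half-open intervals and boundary angles $\alpha^0=0$, $\alpha^{d+1}=\pi/2$ in the telescoping sum for the conservation law, together with verifying that the Dirac jump measures are placed on the correct guard sets $G^{\pm}_{n,i}$ — in particular checking that the clock value $\phi = \alpha^i - \alpha^{i-1}$ at the $i$-th switch indeed satisfies $\phi \geq \Theta$ by \eqref{eq:tdd_orig_alpha_lock} and lies below the prescribed upper bound, so that $\tilde\rho^{\pm}_{n,i} \in \Mp{G^{\pm}_{n,i}}$. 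A secondary bookkeeping point is confirming that edges not traversed by $P$ receive the zero measure consistently with the $\mu_{n,i}=0$ convention used in \eqref{eq:con_conservation}, so that the conservation relations at unused vertices are trivially satisfied.
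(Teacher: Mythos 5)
Your proposal is correct and follows essentially the same route as the paper's proof: both take an arbitrary feasible switching sequence, push it through the construction procedure of Section \ref{sec:construction} to obtain measures from Table \ref{tab:meas_tdd}, check feasibility of the resulting measures for the constraints of \eqref{eq:tdd_lp}, and note that the objective $\norm{I}_2^2$ is preserved, so the infimum over the larger measure-feasible set can only decrease. Your write-up simply carries out in detail the constraint-by-constraint verification (support sets, initial mass, uniformity, harmonics, and the telescoping argument for the conservation law) that the paper's proof leaves implicit.
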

\begin{proof}
    The construction procedure from Section \ref{sec:construction} transforms any switching sequence into a set of measures from Table \ref{tab:meas_tdd}. Feasibility of the constraints of \eqref{eq:tdd_orig_alpha} implies that the measure solution is feasible for the constraints in \eqref{eq:con_all}. Because the objective $\norm{I}^2_2$ is the same between the switching and measure representations, the lower bound $p^* \leq P^*$ is therefore established.
\end{proof}

\begin{rmk}
    We conjecture that $p^*=P^*$ under Assumptions \ref{assum:switch}-\ref{assum:harmonic}, but this exactness has yet to be established (due to both controlled switching and integral constraints).
\end{rmk}
\subsection{Moment Truncation}

The LP in Problem \ref{prob:tdd_lp} involves infinite-dimensional measure variables from Table \ref{tab:meas_tdd}. The moment-SOS hierarchy of semidefinite programs (SDPs) offers one method to truncate the infinite-dimensional LP into  finite-dimensional convex optimization problems \cite{lasserre2009moments}. These problems are indexed by a polynomial degree $\beta \in \N$.
The variables of the finite-dimensional convex programs are vectors $\mathbf{y} = (\mathbf{y}^0, \mathbf{y}^\partial, \mathbf{y}^{\text{occ}}, \mathbf{y}^\pm)$ appropriately indexed by $(n, i)$, corresponding to vectors of possible moments for the relevant measures. These vectors must satisfy linear matrix inequality (LMI) constraints in order to plausibly be moments of measures in Table \ref{tab:meas_tdd}.
Table \ref{prob:tdd_lp} reviews information about the finite-degree truncation of Problem \ref{prob:tdd_lp}, including the dimension of the maximal-size positive semidefinite (PSD) matrix in the moment-constraining LMI for the vector.
% The considered states, dimension, and associated measure:

\begin{table}[t!] 
\caption{Properties of the degree-$\beta$ \ac{OPP} truncation}
\vspace{-10pt}
\begin{align}
\def\arraystretch{1.5}
    \begin{array}{rcccc}          
         \text{Variable:} & \mathbf{y}^0 & \mathbf{y}^\partial_n & \mathbf{y}_{n, i}^{\text{occ}} & \mathbf{y}^\pm_{n, i}  \\
         \text{Relevant Measure:} & \mu^0 & \mu^\partial_n & \mu_{n, i} & \rho^\pm_{n, i}  \\
         \text{Multiplicity:} & 1 & N_{\pi/2} & \abs{\vs} & \abs{\es} \\
         \text{Involved States:} & \phi, I & \phi, I & x & x \\
         \text{Vector Dim.:} & \binom{2+2\beta}{2} & \binom{2+2\beta}{2} & \binom{4+2\beta}{4} & \binom{4+2\beta}{4} \\
         \text{Max. PSD Dim:} & \binom{2+\beta}{2} & \binom{2+\beta}{2} & \binom{4+\beta}{4} & \binom{4+\beta}{4}
    \end{array} \nonumber
\end{align}
    \label{tab:mom_tdd}
    \vspace{-8pt}
\end{table}

Let $p_\beta$ denote the optimal value of the degree-$\beta$-truncation of Program \eqref{prob:tdd_hy} (SDP in which the vectors $\mathbf{y}$ correspond to plausible moments up to degree $2\beta$).
Increasing the polynomial degree $\beta$ enforces additional constraints on the sequences in  $\mathbf{y}$, and produces a rising sequence of lower-bounds $p^*_{\beta} \leq p^*_{\beta+1} \leq \ldots \leq p^*$. The per-iteration complexity of solving for the bound $p^*_\beta$ scales as $O((\abs{\vs} + \abs{\es})^{1.5} \beta^{22.5})$ \cite[Section 3.4]{claeys2016modal}. 
Given a feasible solution $\mathbf{y}$ at degree $\beta$, for every $(n, i) \in \vs$ let  $\xi_{n, i}$ denote the first entry of each $\mathbf{y}^{\text{occ}}_{n, i}$. The value $\xi_{n, i}$ corresponds to the mass $\inp{1}{\mu_{n, i}}$ (averaged time spent in mode $(n, i)$). A switching sequence $\mathcal{T}^{\text{rec}}$ can be recovered from the occupancy values $\xi$ as
\begin{subequations}
\label{eq:sequence_rec}
\begin{align}
    \forall i \in 0..d: \quad  &  u^i = \argmax_{n \in 1..N} \xi_{n, i}, \  (n, i) \in \vs \label{eq:sequence_rec_u}\\
    \forall i \in 1..d: \quad& \textstyle\alpha^i = \sum_{i' \in 0.. i-1} \ \sum_{n \mid (n, i) \in \vs} \xi_{n, i}.
\end{align}
\end{subequations}
While the recovered sequence $\mathcal{T}^{\text{rec}}$ may violate the harmonics specifications, $\mathcal{T}^{\text{rec}}$  may be employed as an initial point for local search algorithms. A possible approach is to use \texttt{fmincon} to optimize Problem \ref{prob:tdd_orig} over $\alpha$ under a fixed sequence of transitions $u$ from \eqref{eq:sequence_rec_u}.
\section{Numerical Examples}

\label{sec:examples}

The MATLAB (R2024a) code to generate all examples is publicly available.\footnote{\url{https://github.com/jarmill/opp_pop}} Dependencies for this code include GloptiPoly 3 \cite{henrion2009gloptipoly} to generate the moment relaxations, YALMIP \cite{lofberg2004yalmip} to parse the moment programs, and Mosek \cite{mosek110} to solve the resultant LMIs via primal-dual interior-point semidefinite programming.
The examples in this section involve control of a five-level converter with levels
% \subsection{5-Level Homogenous Parametric Example}
% \begin{align}
    $L = \begin{bmatrix}
        -1, \ -0.5, \ 0, \ 0.5,  \ 1
    \end{bmatrix}.$
% \end{align}
All examples consider \ac{QW} symmetry and unipolar transitions, and include the harmonic inequality constraint $b_3 \in [-0.01, 0.01]$. The modulation index $M$ is used as a problem parameter with equality constraint $b_1 = M$. The fundamental frequency of $f_1 = 50\,$Hz and interlocking time of $T_s = 10^{-4}\,$s induce an interlocking angle of $\Theta = \pi/100\,$rads. The solutions are reported in terms of the current TDD when ignoring constant factors in \eqref{eq:tdd_current}, yielding the quality metric 
    $Q = \sqrt{\norm{I}_2^2/\pi - b_1^2}$.
Note that $Q_\beta$ refers to a lower bound on the minimal $Q$ computed from a degree-$\beta$ truncation of Problem \ref{prob:tdd_lp}. A switching sequence $\mathcal{T}$ is considered to be (numerically) feasible if $b_1 \in [M, M + 10^{-7}]$ and $b_3 \in [-0.01, 0.01]$.

\subsection{Single Pattern}

This first example focuses on synthesis of an \ac{OPP} with  $M=0.9$ and $d=8$. Figure \ref{fig:q32} reports SDP lower bounds on the minimal $Q$ as obtained by solving the degree-$\beta$ truncation of Problem \ref{prob:tdd_lp} between $\beta \in 1..7$. The black line on the top pane is the $Q$ value from an \texttt{fmincon}-recovered feasible OPP with $Q^{rec} = 1.16004\times 10^{-2}$. This recovered OPP is described by
% \begin{subequations}
\begin{align}
      \{\alpha^i\}_{i=1}^8 &= \begin{Bmatrix}
          0.2020 &     0.2842 &     0.3645 &     0.8636 \\
    0.9900 &     1.1153 &     1.3343 &     1.4172
      \end{Bmatrix} \label{eq:recovered_opp}\\
      \{u^i\}_{i=0}^8 &= \begin{Bmatrix}
          0 & 0.5 & 0 & 0.5 & 1 
          & 0.5 & 1 & 0.5 & 1
      \end{Bmatrix}, \nonumber
\end{align}
and it has Fourier coefficients of $b_1 = 0.9 + 3.0758\times10^{-8}$ and $b_3 = -3.3773 \times 10^{-3}$.
The certified optimality gap for the pattern in \eqref{eq:recovered_opp} is $Q^{rec} - Q_6 = 1.33\times 10^{-5}$. The lower panel of Fig.~\ref{fig:q32} reports the times required to compile the LMI constraints (preprocess) and run Mosek (solver). 

\begin{figure}[t!]
    \centering
    \includegraphics[width=0.8\linewidth]{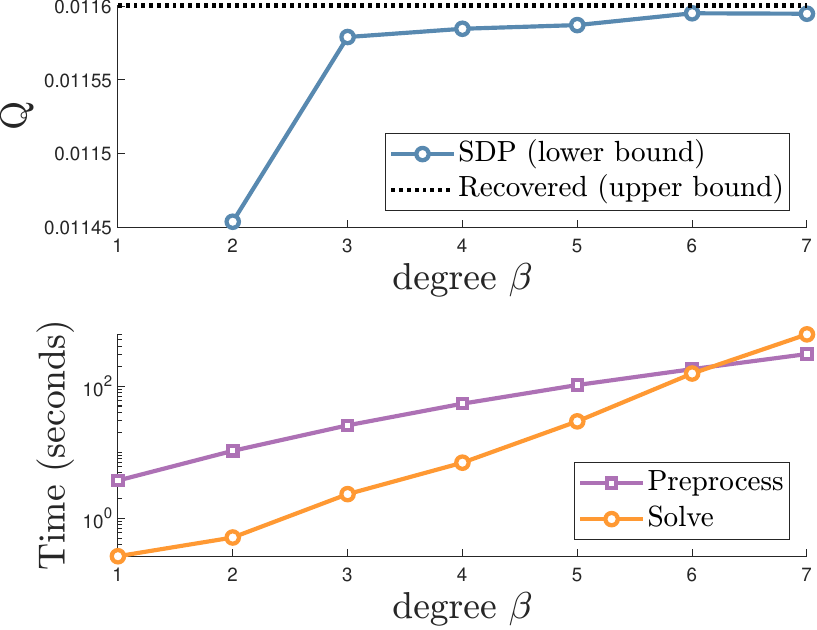}
    \caption{$Q$ bounds and timing for $d=8$, $M=0.9$}
    \label{fig:q32}
\end{figure}

Figure \ref{fig:N5_k32} plots the \ac{OPP} $u(\theta)$ from \eqref{eq:recovered_opp} in blue on the top subplot, and the inductor current $I(\theta)$ in orange in the middle plot. The black curves are the reference voltage $u^*(\theta) = 0.9 \sin(\theta)$ on top and the reference current $I^*(\theta) = -0.9 \cos(\theta)$ in the middle respectively. The bottom pane plots the point-wise residual $I(\theta) - I^*(\theta)$ in green.

\begin{figure}[t!]
    \centering
    \includegraphics[width=0.8\linewidth]{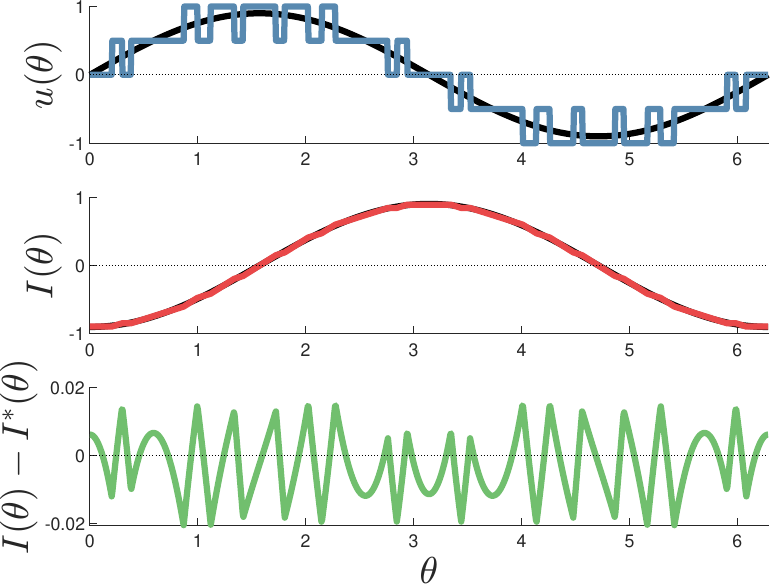}
    \caption{Computed OPP for $k=32$, $M=0.9$}
    \label{fig:N5_k32}
    \vspace{-6pt}
\end{figure}

\subsection{Parameter Sweep}
The second experiment computes the SDP lower bound $Q_3$ over a parameter sweep over $d \in 1..10$ and $M \in 0.05(1..22)$. 
Figure \ref{fig:N5_bound} plots the computed lower bounds $Q_3$ as the pulse number $d$ increases (x-axis) and as the modulation index increases (color).  The plot also shows the infeasibility of fulfilling the modulation requirement $M \geq 0.6$ when $d=1$.
Figure \ref{fig:sweep_gap} plots the logarithm of the difference between the current TDD of the recovered pulse pulse pattern $(Q_{rec})$ and the SDP lower bound reported at the degree-3 truncation of Program \ref{prob:tdd_lp} $(Q_3)$. The white areas are regions where the SDP lower-bound is infeasible, or (if feasible) where \texttt{fmincon} fails to recover a feasible switching sequence.

\begin{figure}[t!]
    \centering
    \includegraphics[width=0.85\linewidth]{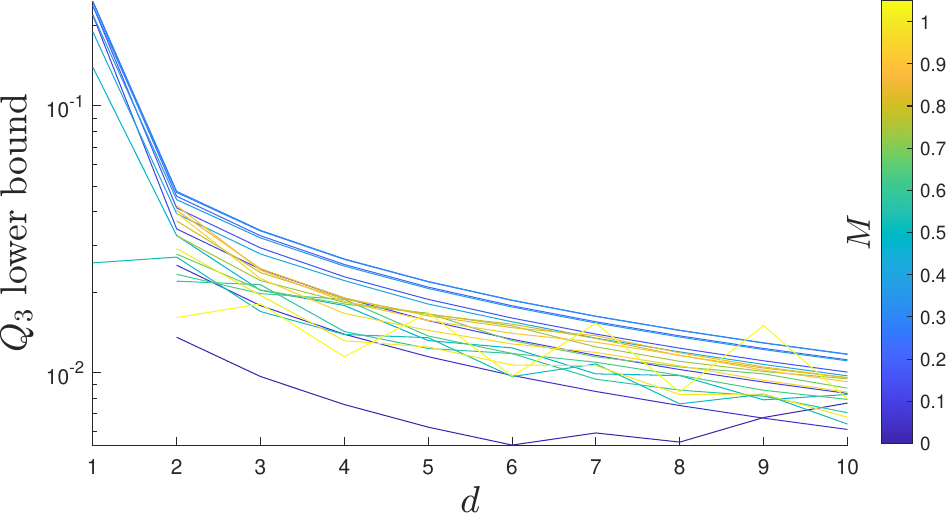}
    \caption{$Q_3$ Lower-bounds v.s.~$M$ and $d$ }
    \label{fig:N5_bound}
\end{figure}

\begin{figure}[t!]
    \centering
    \includegraphics[width=0.9\linewidth]{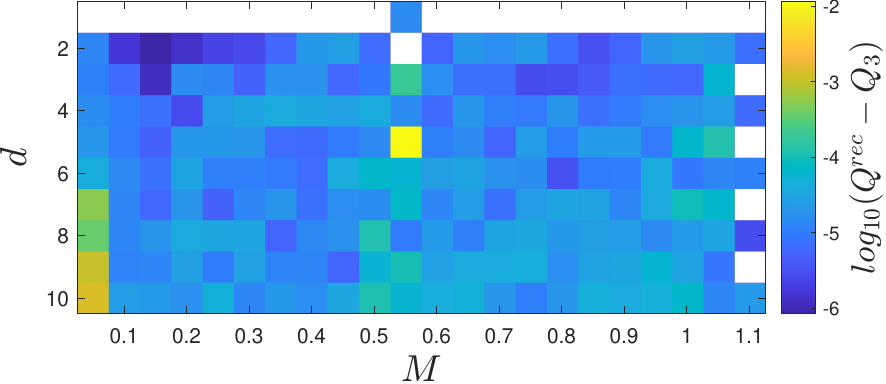}
    \caption{Optimality gaps for $Q_3$-recovered \acp{OPP}}
    \label{fig:sweep_gap}
    \vspace{-6pt}
\end{figure}

\section{Conclusion}

\label{sec:conclusion}

% Static \ac{OPP} are extremely nonconvex
OPPs are an advanced modulation technique that can significantly improve the performance of power electronic systems, particularly at low pulse numbers. However, their computation requires solving highly nonconvex (mixed-integer) optimization problems. This work demonstrated how the \ac{OPP} optimization problem can be embedded into an OCP for a hybrid system, which was in turn lifted into a convex LP in measures. 
The derived LP was then lower-bounded using the moment-SOS hierarchy, yielding tractable bounds on the minimum achievable output current TDD under the imposed constraints. Future work includes formulating conditions for which $\lim_{\beta \rightarrow \infty} p_\beta =P^*$ and performing experimental validation of the computed OPPs in a real-world setting.

% Opportunities to reduce the computational burden of preprocessing and solving include exploiting algebraic constraints \cite{parrilo2005exploiting} and using  network \cite{schlosser2020sparse} decompositions of the flow and jump constraints. 
% The dominant area for further research is adapting the developed hybrid system \ac{LP} framework for bounding three-phase (differential-mode) current distortion \ac{OPP} objectives \cite{birth2019generalized}. Lastly, we aim to verify this theory with experimental validation.
% \urg{Conclude the paper. Summarize the problem and important findings. Add some extensions and future work.}

% An extended Arxiv version of this paper is available at \urg{[Arxiv link goes here] what will the arxiv version have that this work does not?}.

% \input{sections/acknowledgements}

\appendices
% \input{appendix/app_measure_theory}
% \input{appendix/app_moment_methods}
% \input{sections/appendix}

% %%%%%%%%%%%%%%%%%%%%%%%%%%%%%%%%%%%%%%%%%%%%%%%%%%%%%%%%%%%%%%%%%%%%%%%%%%%%%%%%
% \section{Acknowledgements}

% The authors thank Milan Korda for his discussions about occupation measures and time-varying uncertainty.

\bibliographystyle{IEEEtran}
\bibliography{references.bib}

\end{document}